\documentclass[prl,reprint,amsfonts,amsmath,amssymb,superscriptaddress,showpacs]{revtex4-1}

\usepackage{graphicx}
\usepackage{dcolumn}
\usepackage{bm}
\usepackage[bookmarks=true,colorlinks=true,linkcolor=black,citecolor=black,filecolor=black,menucolor=black,urlcolor=black]{hyperref}
\usepackage{amsthm}
\usepackage{mathrsfs}
\usepackage{comment}
\usepackage{color}

\newcommand{\bra}[1]{{\langle #1 |}}

\newcommand{\ket}[1]{{| #1 \rangle}}

\newcommand{\bracket}[1]{{\langle #1 \rangle}}

\DeclareMathOperator{\trace}{Tr}
\newcommand{\Trace}[1]{{\trace[ #1 ]}}

\DeclareMathOperator{\Ex}{\mathbb{E}}
\DeclareMathOperator{\Var}{Var}

\newcommand{\hilbert}{{\mathcal{H}}}

\newcommand{\RHO}{{\hat\rho}}

\newcommand{\LAMBDA}{{\hat{\bm{\lambda}}}}

\newcommand{\THeta}{{\bm{\theta}}}

\newcommand{\PVM}{{\bm{P}}}
\newcommand{\pvm}{{\hat P}}
\newcommand{\POVM}{{\bm{M}}}
\newcommand{\povm}{{\hat M}}

\newcommand{\est}{{\text{est}}}

\theoremstyle{definition}
\newtheorem{theorem}{Theorem}

\newcommand{\rnum}[1]{{\expandafter{\romannumeral #1}}}
\newcommand{\Rnum}[1]{{\uppercase\expandafter{\romannumeral #1}}}

\begin{document}

\preprint{APS/123-QED}

\title{Uncertainty Relation Revisited from Quantum Estimation Theory}
\author{Yu Watanabe}
\affiliation{%
  Department of Physics, University of Tokyo,
  7-3-1, Hongo, Bunkyo-ku, Tokyo 113-0033, Japan
}%
\author{Takahiro Sagawa}
\affiliation{%
  Department of Physics, University of Tokyo,
  7-3-1, Hongo, Bunkyo-ku, Tokyo 113-0033, Japan
}%
\author{Masahito Ueda}
\affiliation{%
  Department of Physics, University of Tokyo,
  7-3-1, Hongo, Bunkyo-ku, Tokyo 113-0033, Japan
}%
\affiliation{%
  ERATO Macroscopic Quantum Control Project, JST, 
  2-11-16 Yayoi, Bunkyo-ku, Tokyo 113-8656, Japan
}%

\date{\today}

\begin{abstract}
  By invoking quantum estimation theory we formulate bounds of errors in quantum measurement for arbitrary quantum states and observables in a finite-dimensional Hilbert space.
  We prove that the measurement errors of two observables satisfy Heisenberg's uncertainty relation,
  find the attainable bound, and provide a strategy to achieve it.
\end{abstract}

\pacs{03.65.Ta, 03.65.Fd, 02.50.Tt, 03.65.Aa}

\maketitle

Quantum theory features two types of uncertainty: indeterminacy of observables and complementarity of quantum measurements.
The indeterminacy~\cite{bib:indeterminacy} reflects the inherent nature of a quantum system 
alone~\cite{bib:kennard,bib:robertson,bib:entropic-uncertainty}.
On the other hand, the complementarity~\cite{bib:complementarity} involves quantum measurement, 
and the estimation of the quantum state from the measurement outcomes is essential~\cite{bib:helstrom-1,bib:helstrom-2,bib:holevo}.
However, how to optimize the measurement and estimation for a given quantum system has remained an outstanding issue.
The purpose of this Letter is to report the resolution to this problem.

The complementarity implies that we cannot simultaneously perform precise measurements of non-commutable observables.
There must exist trade-off relations of the measurement errors concerning non-commutable observables.
Whereas a number of trade-off relations have been found, 
they are neither attainable for all quantum states and observables~\cite{bib:holevo,bib:arthurs-kelly-goodman,bib:ozawa,bib:yuen-lax,bib:nagaoka} 
nor applicable for all quantum systems~\cite{bib:sagawa,bib:englert,bib:busch}.
Due to the advances in controlling quantum states, it is now possible to implement
a scheme that performs a projection measurement on a part of samples and another projection measurement 
on the other samples~\cite{bib:cold-atom-theory,bib:cold-atom-experiment-1,bib:cold-atom-experiment-2,bib:cold-atom-experiment-3}.
However, the attainable bound of the measurement errors for such a scheme is yet to be clarified.

\begin{figure}[hbt]
  \centering
  \includegraphics[width=246pt]{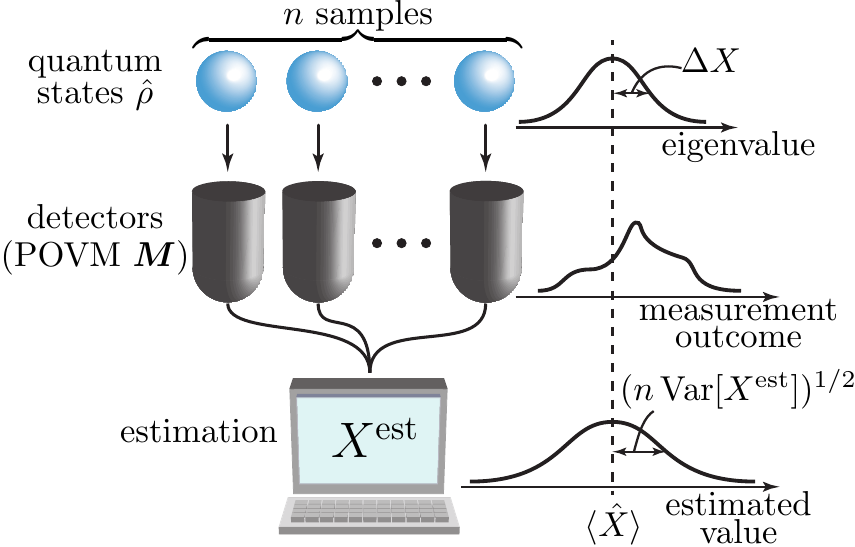}
  \caption{(Color online)
    A measurement described by POVM $\POVM$ is performed to retrieve the information about $\bracket{\hat X}$.
    Since the distribution of the measurement outcomes is, in general, incompatible with $\bracket{\hat X}$ due to the noise introduced in the measurement process,
    it is necessary to estimate $\bracket{\hat X}$ from the outcomes.
    The distribution of the estimated values becomes broader than the original distribution for $\RHO$ due to the measurement errors.
  }
  \label{fig:flowchart}
\end{figure}

In this Letter, we obtain the following three results.
Firstly, we prove that for all measurements the measurement errors of non-commutable observables are bounded from below by the commutation relation concerning observables.
This implies that not only quantum fluctuations but also the measurement errors are bounded by the commutation relation.
However, the bound cannot be achieved for all quantum states and observables.
Secondly, we find the attainable bound for the measurements that perform a projection measurement with or without noise on a part of samples and another measurement on the rest.
We propose a scheme of the experimental setup that achieves the bound.
Thirdly, we numerically vindicate that the attainable bound cannot be violated for all measurements.
Therefore, we conjecture that all measurements satisfy the attainable trade-off relation,
and that the measurements that achieve the attainable bound are optimal for obtaining information concerning two non-commutable observables.

Conventionally, the complementarity in quantum measurement is discussed in terms of the variance of the measurement outcomes~\cite{bib:arthurs-kelly-goodman,bib:ozawa}.
However, the variance of the measurement outcomes per se does not always give a quantitative error concerning the measurement.
To quantify this error in the measurement, it is essential to invoke quantum estimation theory (see Fig.~\ref{fig:flowchart}).
The measurement error is quantified by the difference between the information obtained by the measurement and one by the precise measurement concerning the observable.
The information content corresponding to quantum estimation theory is the Fisher information that gives precision of the estimated value calculated from the measurement outcomes.
However, it is challenging to find the attainable bound for the Fisher information.
Several bounds of the uncertainty relation has been derived by using the Fisher information~\cite{bib:yuen-lax,bib:holevo,bib:nagaoka}, but they are not attainable.
The crucial point of our successful finding of the attainable bound is that we express the relevant operators in terms of the generators of the Lie algebra $\mathfrak{su}(d)$, 
where $d$ is the dimension concerning Hilbert space $\hilbert$.
This greatly facilitates the analysis of our results.

Given $n$ independent and identically distributed (i.i.d.) unknown quantum states $\RHO$,
we perform the same measurement on each of them.
If we want to know the expectation value $\bracket{\hat X}:=\Trace{\RHO\hat X}$ of a single observable $\hat X$,
the optimal strategy is to perform the projection measurement $\PVM = \{\pvm_i\}$ corresponding to the spectral decomposition of $\hat X = \sum_i \alpha_i \pvm_i$ and 
then to calculate the estimated value of $\bracket{\hat X}$ as
$X^\est = \sum_i \alpha_i n_i / n$, where $\alpha_i$ is the $i$th eigenvalue of $\hat X$ and $n_i$ is the number of times the outcome $i$ is obtained.
The variance of the estimator $X^\est$, which quantifies the estimation error, is calculated to be 
$\Var[X^\est] := \Ex[(X^\est)^2] - \Ex[X^\est]^2 = (\Delta X)^2 / n$,
where $(\Delta X)^2 := \bracket{\hat X^2} - \bracket{\hat X}^2$, and $\Ex[X^\est]$ is the expectation value of $X^\est$ taken over the probability that the outcome $i$ is obtained $n_i$ times: 
$p(n_1, n_2, \dots) = n!\prod_i p_i^{n_i}/n_i!$ with $p_i = \Trace{\RHO\pvm_i}$.

When we perform the positive operator-valued measure (POVM) measurement $\POVM=\{\povm_i\}$, 
the variance of the estimator is asymptotically greater than that of the optimal one:
$\lim_{n\rightarrow\infty}n\Var[X^\est] \ge (\Delta X)^2$ (see Fig.~\ref{fig:flowchart}), 
where the left-hand side (LHS) and the right-hand side (RHS) show the variance of the concerned estimator and that of the optimal case per sample, respectively.
(Note that $\Var[X^\est]$ decreases as $n^{-1}$.)
The difference between them is caused by the measurement error of $\hat X$ for $\POVM$.
To quantify the error in the measurement, it is necessary to use the estimator that minimizes the variance.
We define the measurement error as
\begin{equation}
  \varepsilon(\hat X; \POVM) := \min_{X^\est} \lim_{n\rightarrow\infty} n\Var[X^\est] - (\Delta X)^2,
\end{equation}
where the minimization is taken over all so-called consistent estimators that asymptotically converge to $\bracket{\hat X}$: 
$\lim_{n\rightarrow\infty} \text{Prob}(|X^\est - \bracket{\hat X}| < \delta) = 1$
for all quantum states $\RHO$ and an arbitrary $\delta > 0$.
Examples of the consistent estimator include the average of eigenvalues, $\sum_i \alpha_i n_i/n$, for the projection measurement, 
and the maximum likelihood estimator for the POVM measurement.
These examples also minimize $\lim_{n\rightarrow\infty} n\Var[X^\est]$.
As shown later, $\varepsilon(\hat X; \POVM)$ can be expressed in terms of the Fisher information.

The first result in this Letter is that we prove in Theorem~\ref{theo:heisenberg} below that the measurement errors of two observables $\hat X_1$ and $\hat X_2$ satisfy 
\begin{equation}
  \varepsilon(\hat X_1; \POVM) \varepsilon(\hat X_2; \POVM) \ge \frac{1}{4}\left|\bracket{[\hat X_1, \hat X_2]}\right|^2, \label{eq:heisenberg}
\end{equation}
where the square brackets $[\ ,\ ]$ denote the commutator.
Heisenberg originally discussed the trade-off relation between the measurement error of an observable and the disturbance to another non-commutable observable caused by the measurement.
From this argument, it can be expected that the trade-off relation between measurement errors exists.
We have proved this in the form of \eqref{eq:heisenberg}.
Holevo proved a similar formula for position and momentum for the coherent state~\cite{bib:holevo}.
Equation \eqref{eq:heisenberg} is satisfied for all quantum states and observables on any finite dimensional Hilbert space.
However, the equality in \eqref{eq:heisenberg} is not achievable for all quantum states and observables
(see the dash-dotted curve in Fig.~\ref{fig}(a)).

A measurement scheme that performs a projection measurement $\PVM_1$ on $n_1$ samples and another projection $\PVM_2$ on $n_2 = n - n_1$ samples
is asymptotically equivalent to the POVM measurement that randomly performs 
those two projection measurements with probabilities $q_\nu = n_\nu/n$ ($\nu = 1,2$).
We define a set of such random projection measurements as $\mathcal{M}_{\text{random}} := \{q_1\PVM_1 + q_2\PVM_2\,|\,\PVM_1, \PVM_2\in\mathcal{P},\,q_1, q_2 \ge 0,\,q_1 + q_2 = 1\}$,
where $\mathcal{P}$ denotes the entire set of projection measurements, and $q_1\PVM_1 + q_2\PVM_2 = \{q_\nu\pvm_{\nu,i}\}$ for $\PVM_\nu = \{\pvm_{\nu,i}\}$.
In real experimental setups, measurements always suffer from noises. A typical noise model causes
a loss of the visibility for a projection measurement $\PVM$. Such a noisy measurement can be expressed as $\POVM = F\PVM = \{\sum_j F_{ij}\pvm_j\}$,
where the matrix $F$ is the so-called information proccesing matrix whose elements satisfy $F_{ij} \ge 0$ and $\sum_j F_{ij} = 1$.
The class of measurements described by $F\PVM$ include a broad class of experimentally realizable measurements.
For example, a typical scheme of the quantum non-demolition (QND) measurement belongs to this class~\cite{bib:cold-atom-theory,bib:cold-atom-experiment-1,bib:cold-atom-experiment-2}.
We note that the noise of a measurement in this class is equivalent to a classical noise that is characterized by a classical noisy channel with $F_{ij}$.
We define a set of measurements $\mathcal{M}_{\text{noisy}} := \{F\POVM\,|\,\POVM\in\mathcal{M}_{\text{random}}, F_{ij} \ge 0, \sum_j F_{ij} = 1\}$, 
which include random measurements consisting of noisy projection measurements.
Note that those sets of measurements satisfy $\mathcal{M}_{\text{random}} \subset \mathcal{M}_{\text{noisy}} \subset \mathcal{M}_{\text{all}}$,
where $\mathcal{M}_{\text{all}}$ denotes the totality of POVM measurements.

The second result in this Letter is that we prove in Theorem~\ref{theo:achievability} that the attainable bound of the measurement errors for $\mathcal{M}_{\text{noisy}}$ is
\begin{align}
  &\varepsilon(\hat X_1; \POVM) \varepsilon(\hat X_2; \POVM) \notag \\
  &\qquad \ge (\Delta_Q X_1)^2(\Delta_Q X_2)^2 - [\mathcal{C}_Q(\hat X_1, \hat X_2)]^2. \label{eq:our-uncertainty}
\end{align}
The third result is that we numerically vindicate that all $\POVM\in\mathcal{M}_{\text{all}}$ satisfy inequality \eqref{eq:our-uncertainty}.
We rigorously prove this for the qubit system ($\dim\hilbert = 2$)~\cite{bib:supplemental-material}.
Here $\Delta_Q$ and $\mathcal{C}_Q$ are defined as follows.
Let $\hilbert_a$ ($a=A,B,\dots$) be the simultaneous irreducible invariant subspace of $\hat X_\mu$, and $\pvm_a$ the projection operator on $\hilbert_a$.
We define the probability distribution as $p_a:=\bracket{\pvm_a}$ and the post-measurement state of the projection measurement $\{\pvm_A,\pvm_B,\dots\}$ as $\RHO_a := \pvm_a \RHO \pvm_a / p_a$.
Then, $\Delta_Q$ and $\mathcal{C}_Q$ are defined as
$(\Delta_Q X_\mu)^2:=\sum_a p_a (\Delta_a X_\mu)^2$,
and $\mathcal{C}_Q(\hat X_1, \hat X_2) := \sum_a p_a \mathcal{C}_{s,a} (\hat X_1, \hat X_2)$,
where $(\Delta_a X_\mu)^2$ and 
$\mathcal{C}_{s,a}(\hat X_1, \hat X_2) := \frac{1}{2}\Trace{\RHO_a\{\hat X_1, \hat X_2\}} - \Trace{\RHO_a\hat X_1}\Trace{\RHO_a\hat X_2}$ 
are the variance and the symmetrized correlation for $\RHO_a$, respectively,
and the curly brackets $\{\ ,\ \}$ denote the anti-commutator.

If $\hat X_1$ and $\hat X_2$ are simultaneously block-diagonalizable,
then quantum fluctuations and correlations of observables are determined by the diagonal blocks of $\RHO$.
(Note that $\bracket{\hat X_\mu}$ is independent of the off-diagonal blocks of $\RHO$.)
If two observables are commutable with each other, the RHS of \eqref{eq:our-uncertainty} vanishes.

Inequality \eqref{eq:our-uncertainty} is stronger than \eqref{eq:heisenberg} and the trade-off relations 
obtained by Nagaoka~\cite{bib:nagaoka} (see Fig.~\ref{fig}(a)),
and reduces to the trade-off relation found in Ref.~\cite{bib:sagawa} for $\dim\hilbert = 2$ and $\RHO=\hat I/2$.
The optimal measurement of Englert's complementarity~\cite{bib:englert} for $\dim\hilbert=2$ achieves the bound set by \eqref{eq:our-uncertainty}.

\begin{figure}[t]
  \centering
  \includegraphics[width=246pt]{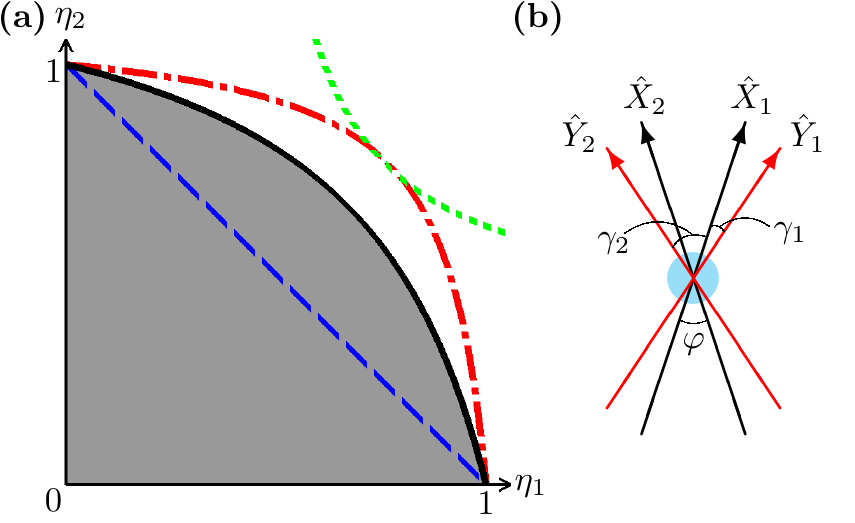}
  \caption{(Color online)
    \textbf{(a)}Plots of measurement errors of $10^9$ randomly chosen POVMs for $\dim\hilbert=4$ ($S=3/2$), $\RHO=\{\hat I/(2S+1) + \ket{S}\bra{S}\}/2$, 
    $\hat X_1 = \hat S_x$ and $\hat X_2 = (\sqrt{3}\hat S_x + \hat S_y)/2$,
    where $\eta_\mu := [\varepsilon(\hat X_\mu;\POVM)/(\Delta X_\mu)^2 + 1]^{-1}$ with $0\le \eta_\mu \le 1$.
    The red dash-dotted, black solid, blue dashed, and green dotted curves show the bounds set by \eqref{eq:heisenberg}, \eqref{eq:our-uncertainty}, \eqref{eq:random-measurement}, 
    and the inequality obtained in Ref.~\cite{bib:nagaoka}, respectively.
    \textbf{(b)}The directions of the spin observables $\hat Y_1$ and $\hat Y_2$ that attain the bound set by \eqref{eq:our-uncertainty} for two spin observables $\hat X_1$ and $\hat X_2$.
    The projection measurement of $\hat Y_\nu$ can be implemented, for example, by using cold atoms and linearly polarized laser whose propagation direction is specified by $\gamma_\nu$.
  }
  \label{fig}
\end{figure}

A simplest but not optimal way to estimate $\bracket{\hat X_1}$ and $\bracket{\hat X_2}$ is 
to perform the projection measurement of $\hat X_\mu$ with probability $q_\mu$ ($\mu = 1, 2$) with $q_1 + q_2 = 1$.
The measurement errors in this measurement satisfy
\begin{equation}
  \varepsilon(\hat X_1;\POVM)\varepsilon(\hat X_2;\POVM) = (\Delta_Q X_1)^2 (\Delta_Q X_2)^2. \label{eq:random-measurement}
\end{equation}
On the other hand, as shown in Theorem~\ref{theo:achievability}, the bound set by \eqref{eq:our-uncertainty} is attained by 
probabilistically performing the projection measurements of two observables $\hat Y_1$ and $\hat Y_2$,
where $\hat Y_1$ and $\hat Y_2$ are described by certain linear combinations of $\hat X_1$ and $\hat X_2$.
Since this optimal measurement can utilize the correlation of $\hat X_1$ and $\hat X_2$, the bound is expressed as \eqref{eq:our-uncertainty}.

We emphasize that the bound set by \eqref{eq:our-uncertainty} can be achieved for all quantum states and observables, 
whereas the bound set by \eqref{eq:heisenberg} cannot.
For example, for $\RHO=r\hat I/(2S+1) + (1-r)\ket{S}\bra{S}$, $\hat X_1 = \hat S_x$, $\hat X_2 = \hat S_x\cos\varphi + \hat S_y\sin\varphi$, and $q_1 = q_2 = 1/2$,
the measured observable $\hat Y_\nu = \hat S_x\cos\gamma_\nu + \hat S_y\sin\gamma_\nu$ is determined by 
the solution to $\cos\varphi\cos^2(\gamma_1 - \gamma_2) - 2\cos(\gamma_1 + \gamma_2 - \varphi)\cos(\gamma_1 - \gamma_2) + \cos\varphi = 0$,
where $\hat I$ is the identity operator, $\hat S_i$ is the spin operator of total spin $S$ in the $i$ $(=x,y,z)$ direction,
and $\ket{S}$ is the eigenstate of $\hat S_z$ with eigenvalue $S$.
The RHS of \eqref{eq:heisenberg} and that of \eqref{eq:our-uncertainty} 
are given by $[\frac{1}{2}(1-r)S\sin\varphi]^2$ and $[rS(2S-1)/6 + S/2]^2\sin^2\varphi$, respectively.
Such an optimal measurement can be implemented, for example, by using cold atoms~\cite{bib:cold-atom-theory,bib:cold-atom-experiment-1,bib:cold-atom-experiment-2,bib:cold-atom-experiment-3}.
By letting an ensemble of atoms interact with a linearly polarized off-resonant laser whose propagation direction is parallel to that specified by $\gamma_\nu$ in $\hat Y_\nu$ (see Fig.\ref{fig}(b)),
the angle of the paramagnetic Faraday rotation of the laser polarization carries information about $\bracket{\hat Y_\nu}$.
The rotation angle can be detected by a polarimeter using a polarization-dependent beam splitter.
If the intensity of the laser is sufficiently strong, this scheme achieves the projection measurement of $\hat Y_\nu$.

Our trade-off relation \eqref{eq:our-uncertainty} is rigorously proved for the measurements in $\mathcal{M}_{\text{all}}$ for $\dim\hilbert=2$
and $\mathcal{M}_{\text{noisy}}$ for $\dim\hilbert \ge 3$.
For a higher dimensional Hilbert space from $\dim\hilbert=3$ to $7$~\cite{bib:number}, we numerically calculate the measurement errors of $10^9$ randomly chosen POVMs in $\mathcal{M}_{\text{all}}$
for randomly chosen $10$ pairs of quantum states and two observables $(\RHO, \hat X_1, \hat X_2)$.
We find that the calculated measurement errors satisfy \eqref{eq:our-uncertainty}.
A typical example of the numerical calculation is shown in Fig.~\ref{fig}(a).
The area within the bound is blacked out by $10^9$ data points with no point found outside of the bound.
Therefore, we conjecture that \eqref{eq:our-uncertainty} is satisfied not only for $\mathcal{M}_{\text{noisy}}$ but also for $\mathcal{M}_{\text{all}}$.

To prove the theorems stated below, we introduce the decomposition of the Hermitian operators on the $d$-dimensional Hilbert space 
by the generators of the Lie algebra $\mathfrak{su}(d)$.
The generators $\LAMBDA = \{\hat\lambda_1,\dots,\hat\lambda_{d^2-1}\}$ are traceless and Hermitian, satisfying
$\Trace{\hat\lambda_i\hat\lambda_j} = \delta_{ij}$.
The quantum state $\RHO$ can be expressed as $\RHO=d^{-1}\hat I + \THeta\cdot\LAMBDA$, 
where $\THeta\in\mathbb{R}^{d^2-1}$ is an $(d^2-1)$-dimensional real vector, 
and $\THeta\cdot\LAMBDA=\sum_{i=1}^{d^2-1}\theta_i\hat\lambda_i$.
An arbitrary observable can also be expanded in terms of the same set of generators as $\hat X = x_0\hat I + \bm{x}\cdot\LAMBDA$.
Then, the expectation value can be written as $\bracket{\hat X} = x_0 + \bm{x}\cdot\THeta$.
Therefore, estimating $\bracket{\hat X}$ amounts to estimating $\bm{x}\cdot\THeta$.
For any consistent estimator $X^\est$ of $\bracket{\hat X}$, the variance $\Var[X^\est]$ satisfies the following Cram\'er-Rao inequality~\cite{bib:cramer-rao}:
$\lim_{n\rightarrow\infty}n\Var[X^\est] \ge \bm{x}\cdot [J(\POVM)]^{-1}\bm{x}$,
where $J(\POVM)$ is the Fisher information matrix whose $ij$ element is defined as 
$[J(\POVM)]_{ij} := \sum_k p_k(\partial_i\log p_k)(\partial_j\log p_k)$, where $\partial_i = \partial/\partial \theta_i$.
For all quantum states and POVMs, there exists some estimator, for example, the maximum likelihood estimator, 
that achieves the equality of the Cram\'er-Rao inequality.

The matrix $J(\POVM)$ varies with varying the POVM, but it is bounded from above by
the quantum Cram\'er-Rao inequality~\cite{bib:caves:quantum-cramer-rao}: $J(\POVM) \le J_Q$,
where $J_Q$ is the quantum Fisher information matrix~\cite{bib:helstrom:quantum-fisher}
which is a monotone metric on the quantum state space with the coordinate system $\THeta$.
The quantum Fisher information matrix is not uniquely determined, but from the monotonicity there exist the minimum $J_S$ and the maximum $J_R$~\cite{bib:petz},
where $J_S$ ($J_R$) is the symmetric (right) logarithmic derivative Fisher information matrix.
Their $ij$ elements are defined as $[J_S]_{ij} := \frac{1}{2}\bracket{\{\hat L_i, \hat L_j\}}$ 
and $[J_R]_{ij} := \bracket{\hat L'_j \hat L'_i}$,
where $\hat L_i$ and $\hat L_i'$ are defined to be the solution to 
$\partial_i \RHO = \frac{1}{2}\{\RHO, \hat L_i\}$ and $\partial_i \RHO = \RHO\hat L'_i$.
It can be shown that $[J_S^{-1}]_{ij} = \mathcal{C}_s(\hat\lambda_i, \hat\lambda_j)$ and $[J_R^{-1}]_{ij} = \mathcal{C}(\hat\lambda_i, \hat\lambda_j)$, 
where $\mathcal{C}(\hat X, \hat Y) := \bracket{\hat X\hat Y} - \bracket{\hat X}\bracket{\hat Y}$
is the (non-symmetrized) correlation function of two observables.
From the Cram\'er-Rao inequality and $\bm{x}\cdot J_S^{-1}\bm{x} = \bm{x}\cdot J_R^{-1}\bm{x} = (\Delta X)^2$, 
we find that $\varepsilon(\hat X; \POVM) = \bm{x}\cdot[J(\POVM)^{-1} - J_Q^{-1}]\bm{x} \ge 0$
is satisfied for any quantum Fisher information.

We now prove the following theorems.
\begin{theorem}
  \label{theo:heisenberg}
  For all quantum states $\RHO$ and observables $\hat X_\mu$, any POVM $\POVM\in\mathcal{M}_{\text{all}}$ satisfies \eqref{eq:heisenberg}.
\end{theorem}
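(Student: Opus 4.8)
The plan is to exploit the right-logarithmic-derivative (RLD) quantum Fisher information $J_R$: unlike the inverse of the symmetric-logarithmic-derivative metric $J_S$, the inverse $J_R^{-1}$ carries a nonzero antisymmetric part, and it is precisely this part that will produce the commutator on the right-hand side of \eqref{eq:heisenberg}. Because $\bm{x}\cdot J_R^{-1}\bm{x} = (\Delta X)^2$, we may take $J_Q = J_R$ and write $\varepsilon(\hat X_\mu;\POVM) = \bm{x}_\mu\cdot[J(\POVM)^{-1} - J_R^{-1}]\bm{x}_\mu$ for $\mu = 1,2$, where $\bm{x}_\mu$ is the $\mathfrak{su}(d)$-coefficient vector of $\hat X_\mu$. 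Combining the quantum Cram\'er--Rao inequality $J(\POVM)\le J_Q$ with the monotonicity bound $J_Q\le J_R$ yields $J(\POVM)\le J_R$, so by anti-monotonicity of the matrix inverse $J(\POVM)^{-1}\ge J_R^{-1}$ as Hermitian matrices. Writing $J_R^{-1} = S + iA$ with $S := \mathrm{Re}\,J_R^{-1}$ real symmetric and $A := \mathrm{Im}\,J_R^{-1}$ real antisymmetric, the matrix $W := J(\POVM)^{-1} - S$ is real symmetric, the positivity statement becomes $W - iA \ge 0$, and $\varepsilon(\hat X_\mu;\POVM) = \bm{x}_\mu\cdot W\bm{x}_\mu$.

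Next I would substitute the complex test vector $\bm{z} = \bm{x}_1 + it\,\bm{x}_2$ with $t\in\mathbb{R}$ into $\bm{z}^\dagger(W - iA)\bm{z}\ge 0$. Since $W$ is symmetric and $A$ antisymmetric, the cross terms collapse and this reduces to the real scalar inequality
\[
  \varepsilon(\hat X_1;\POVM) + t^2\,\varepsilon(\hat X_2;\POVM) + 2t\,(\bm{x}_1\cdot A\,\bm{x}_2) \ge 0, \qquad t\in\mathbb{R},
\]
whose nonnegative-discriminant condition is exactly $\varepsilon(\hat X_1;\POVM)\,\varepsilon(\hat X_2;\POVM)\ge(\bm{x}_1\cdot A\,\bm{x}_2)^2$. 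It then remains to identify the right-hand side with the commutator: from $[J_R^{-1}]_{ij} = \mathcal{C}(\hat\lambda_i,\hat\lambda_j)$ one reads off $A_{ij} = \mathrm{Im}\,\mathcal{C}(\hat\lambda_i,\hat\lambda_j) = \frac{1}{2i}\bracket{[\hat\lambda_i,\hat\lambda_j]}$, so bilinearity together with $[\hat\lambda_i,\hat I]=0$ gives $\bm{x}_1\cdot A\,\bm{x}_2 = \frac{1}{2i}\bracket{[\hat X_1,\hat X_2]}$, which is real because $\bracket{[\hat X_1,\hat X_2]}$ is purely imaginary; hence $(\bm{x}_1\cdot A\,\bm{x}_2)^2 = \frac14\left|\bracket{[\hat X_1,\hat X_2]}\right|^2$, which is \eqref{eq:heisenberg}.

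The step I expect to be the crux is the insistence on $J_R$ rather than the SLD metric: $J_S^{-1}$ is real symmetric and has no antisymmetric part, so the same argument run with $J_S$ would give only the trivial bound $\varepsilon_1\varepsilon_2\ge 0$; the commutator surfaces exclusively through $\mathrm{Im}\,J_R^{-1}$, which makes it essential to justify $J(\POVM)\le J_R$ and that inversion reverses the Hermitian ordering. A secondary technical point is the treatment of states $\RHO$ that are not of full rank, for which $J_R$ must be read on $\supp\RHO$ or obtained as a limit through faithful states $\RHO_\epsilon\to\RHO$ with the inequality preserved under the limit; whenever $J(\POVM)$ fails to be invertible the corresponding $\varepsilon$ diverges and \eqref{eq:heisenberg} holds trivially.
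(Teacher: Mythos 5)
Your proposal is correct and follows essentially the same route as the paper: the RLD quantum Cram\'er--Rao bound $J(\POVM)^{-1}\ge J_R^{-1}$, the complex test vector $\bm{x}_1+ik\bm{x}_2$, and the discriminant of the resulting real quadratic in $k$. You merely make explicit what the paper leaves implicit --- that the commutator arises from the antisymmetric imaginary part of $J_R^{-1}$ via $[J_R^{-1}]_{ij}=\mathcal{C}(\hat\lambda_i,\hat\lambda_j)$, and the handling of rank-deficient $\RHO$ --- which are worthwhile clarifications rather than deviations.
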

\begin{proof}
  From the quantum Cram\'er-Rao inequality, $J(\POVM)^{-1} - J_R^{-1} \ge 0$.
  Since $J(\POVM)$ is real symmetric and $J_R$ is Hermitian, 
  for all observables $\hat X_\mu = x_{0,\mu}\hat I + \bm{x}_\mu\cdot\LAMBDA$ and $k\in\mathbb{R}$,
  $(\bm{x}_1 + i k\bm{x}_2)^\dagger[J(\POVM)^{-1} - J_R^{-1}](\bm{x}_1 + i k\bm{x}_2) \ge 0$.
  Since the discriminant of the quadratic polynomial in the LHS is always negative, \eqref{eq:heisenberg} is proved.
\end{proof}

\begin{theorem}
  \label{theo:achievability}
  For all quantum states $\RHO$ and observables $\hat X_\mu$, any POVM $\POVM\in\mathcal{M}_{\text{noisy}}$ satisfies \eqref{eq:our-uncertainty}.
  Moreover, the measurements that achieve the equality of \eqref{eq:our-uncertainty} exist for all quantum states and observables.
\end{theorem}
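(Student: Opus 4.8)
The plan is to reduce, via monotonicity of the Fisher information, to a convex combination of projective Fisher informations, then to localize the problem to the simultaneous invariant subspaces $\hilbert_a$, and inside each block to perform an explicit computation in the $\mathfrak{su}(d_a)$ coordinates. Recall that $\varepsilon(\hat X_\mu;\POVM) = \bm{x}_\mu\cdot J(\POVM)^{-1}\bm{x}_\mu - (\Delta X_\mu)^2$, the inverse understood as a pseudoinverse, with $\varepsilon_\mu = +\infty$ whenever $\bm{x}_\mu\notin\operatorname{range}J(\POVM)$ --- in which case \eqref{eq:our-uncertainty} holds trivially, so one may assume both errors finite. Given $\POVM = F(q_1\PVM_1 + q_2\PVM_2)\in\mathcal{M}_{\text{noisy}}$, the outcome probabilities of $q_1\PVM_1+q_2\PVM_2$ are $q_\nu\Trace{\RHO\pvm_{\nu,i}}$, whence $J(q_1\PVM_1+q_2\PVM_2) = q_1 J(\PVM_1) + q_2 J(\PVM_2) =: \bar J$, while the data-processing inequality for the stochastic matrix $F$ gives $J(\POVM)\le\bar J$, so $\varepsilon(\hat X_\mu;\POVM)\ge\bm{x}_\mu\cdot\bar J^{-1}\bm{x}_\mu - (\Delta X_\mu)^2$. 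It therefore suffices to prove \eqref{eq:our-uncertainty} with $\bar J$ in place of $J(\POVM)$ over all $\PVM_1,\PVM_2\in\mathcal{P}$ and weights $q_\nu$, and separately to exhibit one such $\PVM_1,\PVM_2,q_\nu$ attaining equality.

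Since $\hat X_1,\hat X_2$ and all their products are block-diagonal along $\hilbert=\bigoplus_a\hilbert_a$, both $\bracket{\hat X_\mu}$ and the right-hand side of \eqref{eq:our-uncertainty} depend only on the block-diagonal part $\sum_a p_a\RHO_a$ of $\RHO$; one must therefore rule out any advantage from the off-diagonal blocks, the natural argument being that the block-reading projective measurement $\{\pvm_a\}$, commuting with every $\hat X_\mu$, may be adjoined without cost, reducing every accessible statistic to a function of the $p_a$'s and of the states $\RHO_a$. Additivity of the Fisher information over the resulting independent sub-experiments then gives $\varepsilon(\hat X_\mu;\cdot) = \sum_a p_a\,\varepsilon_a(\hat X_\mu;\cdot)$, with the inter-block fluctuations of $\Trace{\RHO_a\hat X_\mu}$ cancelling exactly --- this is where the quantities $\Delta_Q$ and $\mathcal{C}_Q$ originate. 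What must be carried per block is not merely the product but the attainable $2\times 2$ covariance matrix $E_a$ for jointly estimating $(\Trace{\RHO_a\hat X_1},\Trace{\RHO_a\hat X_2})$ inside $\hilbert_a$ --- which obeys $E_a\ge[\mathcal{C}_{s,a}(\hat X_\mu,\hat X_\nu)]_{\mu\nu}$ by the quantum Cram\'er-Rao inequality for $J_S$ --- because the global error covariance is $\sum_a p_a E_a$ and, crucially, the measured observables and weights are the same in every block.

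The core is the one-block computation in $\mathfrak{su}(d_a)$ coordinates. Take the two projective measurements to be the eigenbases of $\hat Y_\nu = c_{\nu 1}\hat X_1 + c_{\nu 2}\hat X_2$ with $(c_{\nu\mu})$ real and invertible. Then (i) in its own eigenbasis $\hat Y_\nu$ is estimated with variance exactly $(\Delta_a Y_\nu)^2$ and $\bm{y}_\nu$ lies in the range of the corresponding Fisher matrix, so $\operatorname{span}(\bm{x}_1,\bm{x}_2)=\operatorname{span}(\bm{y}_1,\bm{y}_2)$ lies in the range of $\bar J_a$ and all errors are finite; and (ii) restricted to this two-dimensional span, $\bm{x}\cdot\bar J_a^{-1}\bm{x}$ is controlled by an explicit $2\times 2$ matrix built from $q_\nu$ and from the first two moments of $\hat Y_\nu$ in $\RHO_a$, which can be inverted in closed form. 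This expresses $\varepsilon_a(\hat X_1)$ and $\varepsilon_a(\hat X_2)$ as explicit functions of $(c_{\nu\mu},q_\nu)$ and of $(\Delta_a X_\mu)^2$, $\mathcal{C}_{s,a}$; minimizing the product over the global parameters $(c_{\nu\mu},q_\nu)$ reproduces both the inequality and its attainability, the minimizer being the advertised pair $\hat Y_1,\hat Y_2$ (for two spin observables, the quoted angular condition with $q_1=q_2=1/2$), a measurement lying in $\mathcal{M}_{\text{random}}\subset\mathcal{M}_{\text{noisy}}$; this settles the ``moreover'' clause.

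The hard part is the coupling of the last two steps: one must handle the severe rank-deficiency of each $J(\PVM_\nu)$ so that $\bar J^{-1}$ is unambiguous, invert the reduced $2\times 2$ form and carry out the optimization over $(c_{\nu\mu},q_\nu)$, characterize the attainable within-block covariances $E_a$ sharply enough that the additive decomposition reassembles into \eqref{eq:our-uncertainty} and not into a weaker bound, and reconcile the per-block optima with the single global choice $(\hat Y_1,\hat Y_2,q_1,q_2)$ so that exactly the within-block correlations $\mathcal{C}_{s,a}$ --- rather than the global $\mathcal{C}_s(\hat X_1,\hat X_2)$ --- survive. The $\mathfrak{su}(d)$ parametrization is precisely what keeps the required linear algebra manageable.
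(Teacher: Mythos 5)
Your skeleton coincides with the paper's: reduce to $\mathcal{M}_{\text{random}}$ via the data-processing inequality $J(F\POVM)\le J(\POVM)$, take the two projections to be eigenbases of linear combinations $\hat Y_\nu$ of $\hat X_1,\hat X_2$, and control $\bm{x}\cdot[J(\POVM)^{-1}-J_S^{-1}]\bm{x}$ on the two-dimensional span in the $\mathfrak{su}(d)$ coordinates. The divergence is in the endgame, and it matters. The paper's proof rests on the closed-form identities $\bm{y}_\nu\cdot J(\POVM)^{-1}\bm{y}_\nu=(\Delta Y_\nu)^2+(q_\nu^{-1}-1)(\Delta_Q Y_\nu)^2$ and $\bm{y}_1\cdot J(\POVM)^{-1}\bm{y}_2=\mathcal{C}_s(\hat Y_1,\hat Y_2)-\mathcal{C}_Q(\hat Y_1,\hat Y_2)$, from which the error matrix $E_{\mu\nu}:=\bm{y}_\mu\cdot[J(\POVM)^{-1}-J_S^{-1}]\bm{y}_\nu$ has determinant $(\Delta_Q Y_1)^2(\Delta_Q Y_2)^2-[\mathcal{C}_Q(\hat Y_1,\hat Y_2)]^2$ with the $q_\nu$ dependence cancelling; positive semidefiniteness of $E$ and the covariance transformation $AEA^{\trans}$ then give $\varepsilon_1\varepsilon_2\ge\det(AEA^{\trans})$, which by bilinearity of $\mathcal{C}_Q$ equals the right-hand side of \eqref{eq:our-uncertainty} for \emph{every} choice of $(\hat Y_\nu,q_\nu)$. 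No minimization is required, and attainability collapses to the single scalar condition $\bm{x}_1\cdot E\,\bm{x}_2=0$. Your plan instead calls for inverting the reduced $2\times2$ form, minimizing the product over $(c_{\nu\mu},q_\nu)$, and reassembling per-block covariance matrices $E_a$ into the global bound --- precisely the items you flag as ``the hard part.'' That work is made unnecessary by the determinant invariance (your per-block bookkeeping in particular duplicates what is already encoded in the definitions of $\Delta_Q$ and $\mathcal{C}_Q$), and as written those decisive steps are asserted rather than derived, so the proposal is a correct roadmap but not yet a proof. One limitation you share with the Letter's own one-paragraph argument: the explicit computation covers only projections onto eigenbases of observables in the span of $\hat I,\hat X_1,\hat X_2$; your remark that $\bm{x}_\mu\notin\operatorname{range}\bar J$ forces $\varepsilon_\mu=+\infty$ disposes of part of the general case, but the remaining finite-error case for arbitrary $\PVM_\nu\in\mathcal{P}$ is deferred to the supplemental material in the paper and left open in your plan.
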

\begin{proof}
  If two POVMs satisfy $\POVM' = F\POVM$ with an information processing matrix $F$, those POVMs satisfy $J(\POVM') \le J(\POVM)$.
  Hence, we have only to consider the case when $\POVM \in \mathcal{M}_{\text{random}}$.

  Let $\hat Y_\nu = y_{\nu,0}\hat I + \bm{y}_\nu\cdot\LAMBDA$ ($\nu = 1,2$) be a linear combination of $\hat X_\mu = x_{\mu,0}\hat I + \bm{x}_\mu\cdot\LAMBDA$, 
  and $A = (a_{\mu\nu})$ be its coefficient: $\hat X_\mu = \sum_\nu a_{\mu\nu}\hat Y_\nu$.
  We consider the POVM measurement $\POVM = q_1\PVM_1 + q_2\PVM_2 \in\mathcal{M}_{\text{random}}$, 
  where $\PVM_\nu$ corresponds to the spectral decompositions of the observables $\hat Y_\nu = \sum_i \beta_{\nu,i}\pvm_{\nu,i}$.
  The inverse of $J(\POVM)$ can be obtained as 
  $\bm{y}_\nu\cdot J(\POVM)^{-1}\bm{y}_\nu = (\Delta Y_\nu)^2 + (q_\nu^{-1} - 1)(\Delta_Q Y_\nu)^2$, 
  and $\bm{y}_1\cdot J(\POVM)^{-1}\bm{y}_2 = \mathcal{C}_s(\hat Y_1, \hat Y_2) - \mathcal{C}_Q(\hat Y_1, \hat Y_2)$.
  Therefore,
  \begin{align}
    &\varepsilon(\hat X_1; \POVM)\varepsilon(\hat X_2; \POVM)
    \notag \\
    & \quad\ge \det
    \begin{pmatrix}
      \frac{q_2}{q_1}(\Delta_Q Y_1)^2 & -\mathcal{C}_Q(\hat Y_1, \hat Y_2) \\
      -\mathcal{C}_Q(\hat Y_1, \hat Y_2) & \frac{q_1}{q_2}(\Delta_Q Y_2)^2
    \end{pmatrix}
    (\det A)^2 \notag \\
    &\quad = (\Delta_Q X_1)^2(\Delta_Q X_2)^2 - [\mathcal{C}_Q(\hat X_1, \hat X_2)]^2.
  \end{align}
  The condition for the equality to hold is that $\bm{x}_1\cdot(J(\POVM)^{-1} - J_S^{-1})\bm{x}_2$ vanishes. 
  The observables $\hat Y_\nu$ that satisfy this condition exist for all $\RHO$.
\end{proof}

To summarize, we formulate the complementarity of the quantum measurement in a finite-dimensional Hilbert space by invoking quantum estimation theory.
To quantify the information retrieved by the measurement, it is essential to take into account the estimation process.
We prove that the measurement errors of non-commutable observables satisfy Heisenberg's uncertainty relation,
and find the stronger bound that can be achieved for all quantum states and observables.

\begin{acknowledgments}
This work was supported by 
Grants-in Aid for Scientific Research (KAKENHI 22340114 and 22103005),
the Global COE Program ``the Physical Sciences Frontier'',
and the Photon Frontier Network Program, from MEXT of Japan.
Y.W. and T.S. acknowledge support from JSPS (Grant No. 216681 and No. 208038, respectively).
\end{acknowledgments}

\end{document}